\theoremstyle{plain}
\newtheorem{theorem}[]{Theorem}
\theoremstyle{plain}
\newtheorem{corollary}{Corollary}
\theoremstyle{definition}
\newtheorem{defn}{Definition}
\theoremstyle{remark}
\DeclareMathOperator*{\argmax}{arg\,max}
\DeclareMathOperator*{\argmin}{arg\,min}
\begin{document}

\title{Single-Shot Lossy Compression for \\ Joint Inference and Reconstruction} 



 \author{%
   \IEEEauthorblockN{Oğuzhan Kubilay Ülger, Elza Erkip}
   \IEEEauthorblockA{Department of Electrical and Computer Engineering \\
                    New York University, New York, USA \\
                    \{ou2007, elza\}@nyu.edu}
                   }
%

\maketitle

\begin{abstract}
  In the classical source coding problem, the compressed source is reconstructed at the decoder with respect to some distortion metric. Motivated by settings in which we are interested in more than simply reconstructing the compressed source, we investigate a single-shot compression problem where the decoder is tasked with reconstructing the original data as well as making inferences from it. Quality of inference and reconstruction is determined by a distortion criteria for each task. Given allowable distortion levels, we are interested in characterizing the probability of excess distortion. Modeling the joint inference and reconstruction problem as direct-indirect source coding one, we obtain lower and upper bounds for excess distortion probability. We specialize the converse bound and present a new easily computable achievability bound for the case where the distortion metric for reconstruction is logarithmic loss.
  
\end{abstract}

\section{Introduction}
 Classical source coding problem deals with compressing a source and transmitting it through a rate limited link. The source is then reconstructed in a possibly lossy manner where the quality of the reconstruction is measured by a distortion metric \cite{coverthomas}. The choice of distortion metric depends on properties of the source as well as the end goal of the communication. For example, if the source represents an image and the goal is to store the image for personal use, a good distortion measure would then be aligned with perceived quality by human visual system. 
 
Compression to enable inference at the decoder can be thought of as indirect lossy source coding \cite{dobrushin1962}, where the encoder cannot directly observe the information to be inferred, but only has access to a correlated observation. Continuing with the image example, consider this time that we are interested in describing the class of the image that the encoder observes. In this case, an appropriate distortion measure could be Hamming distortion. 

In this work, we are interested in compression to enable both reconstruction and inference at the decoder. This can be thought as a problem that combines the direct and indirect source coding. We can model the source as having two components: a direct component that is observed by the encoder and an indirect component which is not available at the encoder but has to be inferred from direct part. The task of the decoder is to reconstruct the observed component as well as to infer the indirect component, each with respect to its own distortion criteria. Simply compressing the observed source for reconstruction and then inferring the indirect part could lead to suboptimal performance, hence the encoder has to take both inference and reconstruction tasks into consideration when compressing the observed source. Considering the earlier example, this corresponds to compressing the image at the encoder for the decoder to both reconstruct the image and infer its underlying class. Another example, pointed in \cite{liu2022}, is speech compression where words in spoken text form have to be inferred from the compressed signal. It is natural to assume that both the speech and words are desirable at the decoder side, hence necessitating joint reconstruction and inference.

In the literature, the aforementioned problem was referred to as "semantic" \cite{liu2022} and "goal oriented" \cite{stavrou2022} source coding. In \cite{liu2022}, the authors considered the asymptotic rate-distortion region and analyzed the case with correlated Gaussian sources with squared error distortion. In \cite{stavrou2022}, the authors extended the work of \cite{liu2022} to the discrete memoryless source setting. A special case of this problem is robust descriptions \cite{gamal1982}, where the goal is to obtain two descriptions of the same observed data with two distinct distortion criteria.

Our focus in this paper is on the single-shot version of the joint inference and reconstruction problem. The single-shot compression approach is relevant in many settings. Modern image compressors that are based on artificial neural networks \cite{balle2017} work in a single-shot manner. Sensor networks with low latency and complexity requirements such as the ones deployed in autonomous cars for object detection need to operate in single-shot (or in general finite blocklength) regime. For the single-shot problem, we are interested in analyzing the probability of exceeding either distortion level under a fixed length compression rate constraint. The extension to the finite blocklength setting is also possible by considering the source as a vector. Most general bounds for the excess distortion probability for the direct and indirect source coding problems were given in \cite{kostina2012} and \cite{kostina2016noisy} respectively. The authors in \cite{shkel2017} extensively studied single-shot source coding problem considering logarithmic loss as the distortion metric. 

Here we generalize \cite{kostina2012} and \cite{kostina2016noisy} to the joint inference and reconstruction problem as well as provide specialized bounds when the distortion metric for the reconstruction of the observed source is logarithmic loss. We choose logarithmic loss because, as argued in \cite{no2015}, there is an equivalence between reconstruction of an observed source under an arbitrary distortion measure and that of under logarithmic loss. In addition, logarithmic loss is popular in learning theory and particularly useful in the settings where soft-decisions are allowed \cite{courtade2013}. In the context of source coding, it has pleasing connections with several information theoretic quantities \cite{courtade2011,courtade2013}.


Rest of this paper is organized as follows. In Section \ref{sec:model} we formally define the setting of our problem and our objectives. In Section \ref{sec:bounds} we introduce achievability and converse bounds for excess distortion probability. Section \ref{sec:Logloss} specializes the bounds for the logarithmic loss setting as well as providing a new achievability bound. Finally we finish the paper with concluding remarks in Section \ref{sec:conc}.

\textit{Notation:} Random variables and their realizations are denoted by capital and small letters respectively (e.g. $X$ and $x$). For a random variable $X$, the alphabets are represented by caligraphic letters such as $\mathcal{X}$. Distributions and conditional distributions are denoted by $P_X$ and $P_{X|Y}$ respectively. Cardinality of a set $\mathcal{X}$ is denoted by $|\mathcal{X}|$. The set of probability distributions on $\mathcal{X}$ is denoted by $\mathcal{P}_\mathcal{X}$. All logarithms are base 2.

\section{System Model} \label{sec:model}

 Let $(X,Y)$ be a pair of discrete random variables with joint distribution $P_{XY}$ defined on the finite product alphabet $\mathcal{X} \times \mathcal{Y}$. Here $X$ represents the direct component of the source observed  at the encoder and $Y$ is the indirect component. The goal is to reconstruct $X$ and infer $Y$ from a compressed version of $X$. The block diagram for our system model can be seen in Figure~\ref{fig:model}.

The encoder is tasked with mapping the observed source into $M$ codewords and the decoder wishes to reconstruct the pair $(\hat{X},\hat{Y})$ defined on reconstruction alphabets $\hat{\mathcal{X}}\times\hat{\mathcal{Y}}$. The quality of the reconstructions are determined by two distortion criteria $d_1:\mathcal{X}\times\hat{\mathcal{X}} \longrightarrow \mathbb{R}^+$ and $d_2:\mathcal{Y}\times\hat{\mathcal{Y}} \longrightarrow \mathbb{R}^+$. These distortion measures are referred to as direct (for reconstruction) and indirect (for inference) distortion measure respectively. Given two allowed distortion levels $(D_1,D_2)$ we are interested in keeping the probability of exceeding either distortion level below some threshold $\epsilon$
\begin{align}
    \mathbb{P}[\{d_1(X,\hat{X})>D_1\} \cup \{d_2(Y,\hat{Y})>D_2\}]\leq\epsilon.
\end{align}
Following definitions will make our goals formal.

\begin{defn}
A \textit{single-shot joint inference and reconstruction code} $(f,g)$ with size $M$ consists of pair of mappings:
\begin{align}
    f &: \mathcal{X} \longrightarrow \{1,\ldots,M\} \\
    g &: \{1,\ldots,M\} \longrightarrow \hat{\mathcal{X}} \times \hat{\mathcal{Y}}.
\end{align}
where $f$ represents the encoder and $g$ the joint decoder consisting of two elements $g=(g_1,g_2)$.
\end{defn}
\begin{defn} \label{defn:code}
Given a direct source $X$ and an indirect source $Y$ with joint distribution $P_{XY}$, reconstruction alphabets $\hat{\mathcal{X}}\times\hat{\mathcal{Y}}$ and two distortion metrics $d_1:\mathcal{X}\times\hat{\mathcal{X}} \longrightarrow \mathbb{R}^+$ and $d_2:\mathcal{Y}\times\hat{\mathcal{Y}} \longrightarrow \mathbb{R}^+$, \textit{an $(M,D_1,D_2,\epsilon)$-code} is a single-shot joint inference and reconstruction code with mappings $(f,g)$ with size $M$ such that:
\begin{align}
    \mathbb{P}[\{d_1(X,g_1(f(X)))>D_1\} \cup \{d_2(Y,g_2(f(X)))>D_2\}] \leq \epsilon.
\end{align}
\noindent where $D_1,D_2$ are given distortion levels and $\epsilon$ is the excess distortion probability threshold.
\end{defn}
\begin{defn}
    Given $(M,D_1,D_2)$, \textit{the minimum achievable excess distortion probability} is defined as:
\begin{align}
    \epsilon^*(M,D_1,D_2) = \min \{\epsilon : \exists (M,D_1,D_2,\epsilon)\text{-code}\}.
\end{align}
\end{defn}
In this work, we are interested in characterizing $\epsilon^*(M,D_1,D_2)$ by providing achievability and converse bounds.

\begin{figure}[t]
    \centering
    \includegraphics[width=0.44\textwidth]{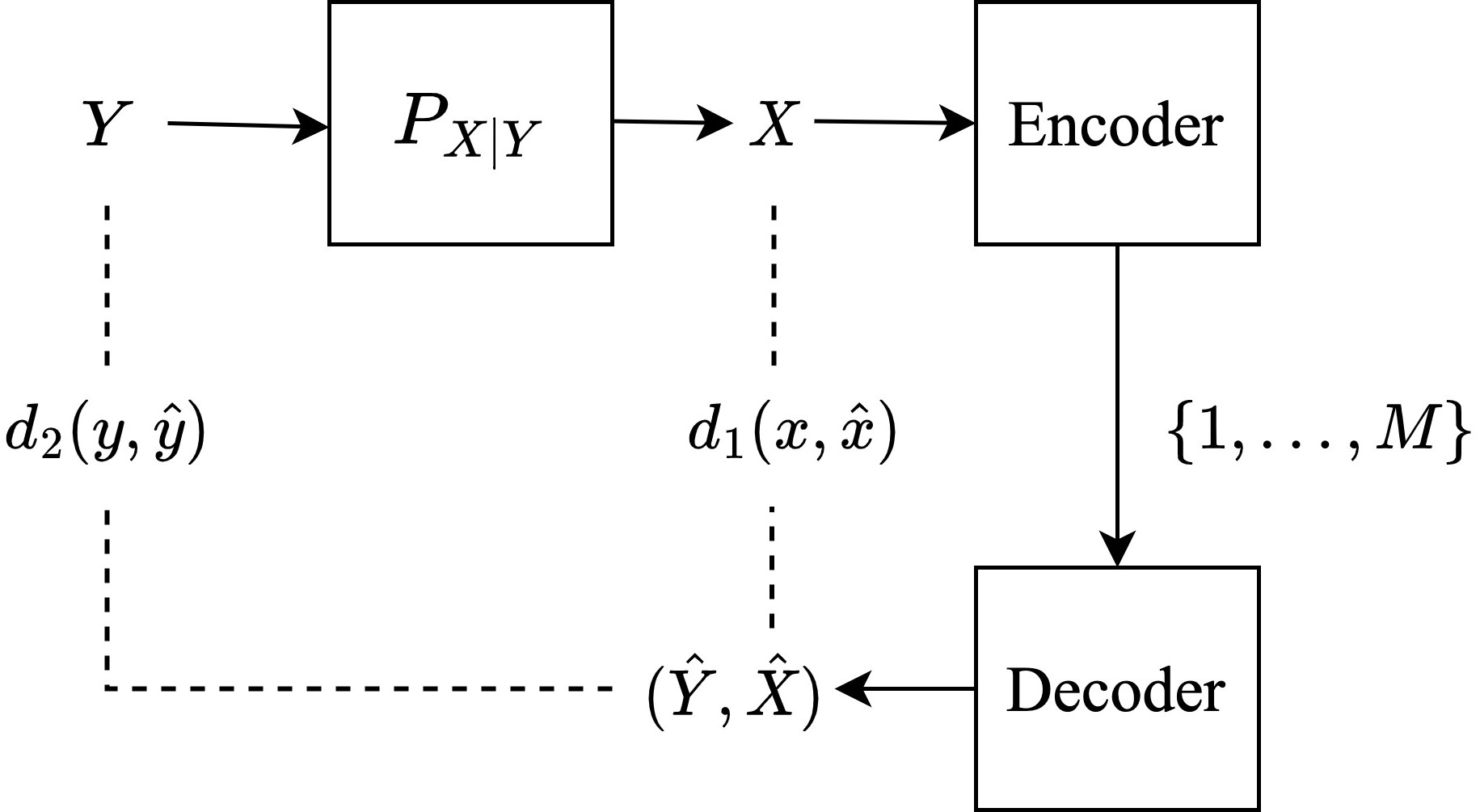}
    \caption{Block diagram for single-shot joint inference and reconstruction problem}
    \label{fig:model}
    \vspace{-0.3cm}
\end{figure}
\vspace{-0.3cm}
\subsection{Preliminaries} 
We first present asymptotic rate distortion functions (when $n$ i.i.d. source samples are jointly compressed with $n\rightarrow \infty$) for the three related source coding settings. Given a source distribution $P_X$ and direct distortion metric $d_1(x,\hat{x})$ the well-known direct rate distortion function \cite{coverthomas} is given by
\begin{align}
\begin{split}
    R_1(D_1) &= \min_{P_{\hat{X}|X}:}I(X;\hat{X}) \label{eq:r1d1}\\
    &\mathbb{E}[d_1(X,\hat{X})] \leq D_1
\end{split}
\end{align}
For given $P_{XY}$ where $X$ is the observed source and Y is the indirect source with distortion metric $d_2(y,\hat{y})$, the indirect rate distortion function \cite{berger2003} is given by
\begin{align}
\begin{split}
    R_2(D_2) &= \min_{P_{\hat{Y}|X}:}I(X;\hat{Y}) \label{eq:r2d2} \\ 
     &\mathbb{E}[\Tilde{d}_2(X,\hat{Y})] \leq D_2
 \end{split}
\end{align}
where $\Tilde{d}_2(x,\hat{y}) = \mathbb{E}[d_2(Y,\hat{y})|X=x]$.
For the asymptotic case of the joint inference and reconstruction problem, the  rate distortion function is given by \cite{liu2022,stavrou2022}
\begin{align}
\begin{split}
    R(D_1,D_2) &= \min_{P_{\hat{X}\hat{Y}|X}: } I(X;\hat{X},\hat{Y}) \label{eq:RD} \\
    &\mathbb{E}[d_1(X,\hat{X})] \leq D_1 \\
    &\mathbb{E}[\Tilde{d}_2(X,\hat{Y})] \leq D_2.
\end{split}
\end{align}

In the joint problem we denote the minimum achievable distortions with finite rate as:
\begin{align}
    D_{1,\min} &= \min\{D_1 : R(D_1,D_2 = \infty) < \infty\} \\
    D_{2,\min} &= \min\{D_2 : R(D_1 = \infty ,D_2 ) < \infty \}
\end{align}

Although in the single-shot setting source coding with expected distortion constraint has a different solution than the one with probability of excess distortion constraint, for bounded distortion measures, asymptotically minimum achievable rates under vanishing excess distortion constraint and expected distortion constraint are the same \cite[Theorem 7.3]{csiszar2011}.

 Finally we define some information theoretic quantities that will be useful to us in providing converse bounds. 
\begin{defn}
Given a pair of discrete random variables $(X,Y)$ with distribution $P_{XY}$, the \textit{mutual information density} is:
\begin{align}
    \imath_{X;Y}(x;y) = \log \frac{P_{X|Y}(x|y)}{P_X(x)}.
\end{align}
\end{defn}
\noindent and $\imath_X(x):=\imath_{X;X}(x,x)$ is the information of $x$.
\begin{defn} \cite[Definition 1]{kostina2016noisy} 
     \textit{The indirect $D_2$-tilted information} is defined as:
\begin{align}
    \jmath_{X;\hat{Y}^\star}(x,y,\hat{y},D_2) &= \imath_{X;\hat{Y}^{\star}}(x;\hat{y}) +\lambda_2'(d_2(y,\hat{y})-D_2)
\end{align}
where $\imath_{X;\hat{Y}^{\star}}(x;\hat{y})$ is defined with respect to $P_{\hat{Y}^{\star}|X}$, a distribution achieving the minimum in (\ref{eq:r2d2}), and
\begin{align}
    \lambda_2' = - \left.\frac{d R_2(D_2')}{d D_2'}\right|_{D_2'=D_2}.
\end{align}
assuming $R_2(D_2)$ is differentiable at $D_2$.
\end{defn}
We adopt the above definition to our joint inference and reconstruction problem as follows:
\begin{defn} \label{defntilt}
     \textit{The joint $(D_1,D_2)$-tilted information} is defined as:
\begin{align}
\begin{split}
    \jmath_{X;\hat{X}^*\hat{Y}^*}(x,y,\hat{x},\hat{y},D_1,D_2) &= \imath_{X;\hat{X}^*\hat{Y}^*}(x;\hat{x},\hat{y}) \\
    &+ \lambda_1(d_1(x,\hat{x})-D_1) \\
    &+\lambda_2(d_2(y,\hat{y})-D_2)
\end{split}
\end{align}
where $\imath_{X;\hat{X}^*\hat{Y}^*}(x;\hat{x},\hat{y})$ is defined with respect to  $P_{\hat{X}^*\hat{Y}^*|X}$, a distribution that achieves the minimum in the rate distortion
function defined in (\ref{eq:RD}), and
\begin{align}
\begin{split} \label{eq:lamdas}
    \lambda_1 &= - \left.\frac{\partial R(D_1',D_2)}{\partial D_1'}\right|_{D_1'=D_1}, \\
    \lambda_2 &= - \left.\frac{\partial R(D_1,D_2')}{\partial D_2'}\right|_{D_2'=D_2}.
\end{split} 
\end{align}
assuming $R(D_1,D_2)$ is differentiable at $(D_1,D_2)$.
\end{defn}

\section{Single-Shot Bounds} \label{sec:bounds}
In this section we introduce single-shot achievability and converse bounds for the minimum excess distortion probability $\epsilon^*(M,D_1,D_2)$. In Section \ref{sec:Logloss}, we specialize the converse bound to the case where $d_1(x,\hat{x})$ is logarithmic loss and provide an alternate achievability bound for this case.

\subsection{Achievability Bound}

\begin{theorem} \label{ach}
For the single-shot joint inference and reconstruction problem with direct source $X$ and indirect source $Y$ with joint distribution $P_{XY}$ and distortion metrics $d_1(x,\hat{x})$ and $d_2(y,\hat{y})$ respectively, the minimum achievable excess distortion probability satisfies:
\begin{align}
    \epsilon^*(M,D_1,D_2) \leq \inf_{P_{\hat{X}\hat{Y}}} \int_{0}^{1} \mathbb{E} \left[ \left. \mathbb{P} \left[\pi(X,\hat{X},\hat{Y})>t \right| X \right] ^M \right]dt \label{eq:ach}
\end{align}
where the infimum is taken over all distributions $P_{\hat{X}\hat{Y}}$ independent of $X$ and
\begin{align}
    \pi(x,\hat{x},\hat{y}) = \mathbb{P}\left[ \{d_1(X,\hat{x}) >D_1\} \cup \{d_2(Y,\hat{y}) > D_2\} | X=x \right]. \label{EqPi}
\end{align}
\end{theorem}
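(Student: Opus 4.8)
The plan is to use a random coding argument with an optimal (minimum conditional excess distortion) encoding rule, generalizing the single-shot achievability bounds of \cite{kostina2012,kostina2016noisy} for the direct and indirect problems. First I would fix a candidate codeword distribution $P_{\hat{X}\hat{Y}}$ independent of $X$ and draw a random codebook of $M$ i.i.d.\ codewords $(\hat{X}_1,\hat{Y}_1),\dots,(\hat{X}_M,\hat{Y}_M)\sim P_{\hat{X}\hat{Y}}$. The decoder $g$ outputs the codeword indexed by the received message. The crucial ingredient is the encoder's selection rule: upon observing $X=x$, the encoder computes $\pi(x,\hat{X}_j,\hat{Y}_j)$ for each $j$ and transmits the index of the minimizer. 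The point worth stressing is that although the encoder never sees $Y$, it can nonetheless evaluate $\pi$ exactly, since by (\ref{EqPi}) this quantity depends only on $x$, the codeword, and the known kernel $P_{Y|X}$; this is precisely how the indirect inference task is absorbed into the encoder.

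With this rule, for a fixed codebook $\mathcal{C}$ and fixed realization $X=x$, the remaining randomness is only in $Y$ given $X=x$, so the conditional probability of excess distortion is exactly $\min_{1\le j\le M}\pi(x,\hat{X}_j,\hat{Y}_j)$. Averaging over $X$ gives the excess distortion probability $P_e(\mathcal{C})$ of this code, and averaging further over the random codebook and interchanging the (nonnegative) expectations by Tonelli yields
\begin{align}
\mathbb{E}_{\mathcal{C}}\!\left[P_e(\mathcal{C})\right] = \mathbb{E}_X\!\left[\,\mathbb{E}_{\mathcal{C}}\!\left[\min_{1\le j\le M}\pi(X,\hat{X}_j,\hat{Y}_j)\,\Big|\,X\right]\right].
\end{align}
Since the random codebook attains this average, there must exist at least one deterministic codebook whose excess distortion probability is no larger, furnishing a valid $(M,D_1,D_2,\epsilon)$-code for this value of $\epsilon$ and hence an upper bound on $\epsilon^*(M,D_1,D_2)$.

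It remains to evaluate the inner expectation. For fixed $x$ the variables $W_j:=\pi(x,\hat{X}_j,\hat{Y}_j)$ are i.i.d.\ and $[0,1]$-valued, so using $\mathbb{E}[Z]=\int_0^1\mathbb{P}[Z>t]\,dt$ together with $\{\min_j W_j>t\}=\bigcap_j\{W_j>t\}$ and independence gives
\begin{align}
\mathbb{E}_{\mathcal{C}}\!\left[\min_{j}W_j\,\Big|\,X=x\right] = \int_0^1 \mathbb{P}\big[\pi(x,\hat{X},\hat{Y})>t\big]^M\,dt,
\end{align}
with $(\hat{X},\hat{Y})\sim P_{\hat{X}\hat{Y}}$. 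Substituting back, identifying $\mathbb{P}[\pi(x,\hat{X},\hat{Y})>t]$ with the conditional probability $\mathbb{P}[\pi(X,\hat{X},\hat{Y})>t\mid X=x]$ (legitimate because $(\hat{X},\hat{Y})$ is independent of $X$), and finally taking the infimum over all admissible $P_{\hat{X}\hat{Y}}$ reproduces (\ref{eq:ach}). The hard part will be setting up and justifying the encoding rule for the indirect component; once that conceptual step is in place, the order-statistics identity and the Fubini/Tonelli interchange are entirely routine.
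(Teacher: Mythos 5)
Your proposal is correct and follows essentially the same route as the paper's proof: a random codebook drawn i.i.d.\ from $P_{\hat{X}\hat{Y}}$, an encoder that transmits the index minimizing $\pi(x,\hat{x},\hat{y})$ (computable at the encoder since $\pi$ depends only on $x$, the codeword, and $P_{Y|X}$), the identity $\mathbb{E}[Z]=\int_0^1\mathbb{P}[Z>t]\,dt$, and the factorization of $\{\min_j W_j>t\}$ into a product over independent codewords before taking the infimum. The only cosmetic difference is that you make the derandomization step (existence of a deterministic codebook at least as good as the ensemble average) explicit, which the paper leaves implicit.
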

\begin{proof}
    See Appendix \ref{apx:ach}.
\end{proof}
The proof is based on random coding arguments as in \cite{kostina2012,kostina2016noisy} and extends their result to the joint direct and indirect distortion constraints case. In the special case where $D_1=\infty$, meaning that there is only a single distortion constraint on the indirect part of the source, Theorem \ref{ach} reduces to achievability result in \cite{kostina2016noisy}. This can be seen by setting $D_1=\infty$ in (\ref{EqPi}). Note that in this case a result we get a union of an empty set and $\{d_2(Y,\hat{y}) > D_2\}$. Hence we remove the dependency on $d_1(x,\hat{x})$ and $\hat{X}$.



\subsection{Converse Bound}

\begin{theorem} \label{conv}
For the single-shot joint inference and reconstruction problem with direct source $X$ and indirect source $Y$ with joint distribution $P_{XY}$ and distortion metrics $d_1(x,\hat{x})$ and $d_2(y,\hat{y})$ respectively, the minimum achievable excess distortion probability satisfies:
\begin{align}
\begin{split}
    &\epsilon^*(M,D_1,D_2) \geq \inf_{P_{\hat{X}\hat{Y}|X}} \sup_{\gamma \geq 0 } \\ 
    &\left\{ \mathbb{P}\left[\jmath_{X;\hat{X}^*,\hat{Y}^*}(X,Y,\hat{X},\hat{Y},D_1,D_2) \geq \log M + \gamma \right] - 2^{-\gamma} \right\}    
\end{split}
\end{align}
where the joint tilted information $\jmath_{X;\hat{X}^*,\hat{Y}^*}(x,y,\hat{x},\hat{y},D_1,D_2)$ is defined according to some distribution $P_{\hat{X}^*\hat{Y}^*|X}$ that achieves the asymptotic rate-distortion function $R(D_1,D_2)$.
\end{theorem}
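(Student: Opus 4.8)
The plan is to adapt the single-shot converse of \cite{kostina2012} and its noisy (indirect) counterpart \cite{kostina2016noisy} to the joint direct--indirect setting, phrasing everything as an upper bound on the probability of \emph{correct} decoding. First I would fix an arbitrary $(M,D_1,D_2,\epsilon)$-code $(f,g)$ as in Definition~\ref{defn:code}. Such a code produces a reconstruction $(\hat X,\hat Y)=g(f(X))$ that is a deterministic function of $X$, hence realizes one particular conditional law $P_{\hat X\hat Y|X}$, and uses a codebook $\mathcal C=\{(g_1(j),g_2(j)):j\in\{1,\dots,M\}\}$ with $|\mathcal C|\le M$. Writing the success probability as $1-\epsilon=\mathbb P[\{d_1(X,\hat X)\le D_1\}\cap\{d_2(Y,\hat Y)\le D_2\}]$ and fixing $\gamma\ge 0$, I would introduce the event $\mathcal E=\{\jmath_{X;\hat X^{*}\hat Y^{*}}(X,Y,\hat X,\hat Y,D_1,D_2)\ge\log M+\gamma\}$ and split $\mathbb P[\text{success}]\le\mathbb P[\mathcal E^{c}]+\mathbb P[\text{success}\cap\mathcal E]$. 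Since $\mathbb P[\mathcal E^{c}]=1-\mathbb P[\mathcal E]$, proving $\mathbb P[\text{success}\cap\mathcal E]\le 2^{-\gamma}$ rearranges into $\epsilon\ge\mathbb P[\mathcal E]-2^{-\gamma}$.

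The heart of the argument---and the step I expect to be the main obstacle---is the estimate $\mathbb P[\text{success}\cap\mathcal E]\le 2^{-\gamma}$, which needs an analogue of the tilted-information property suited to the indirect component. I would decompose over codewords: each $x$ lies in exactly one decoding cell $\{x:g(f(x))=(\hat x,\hat y)\}$, so enlarging each inner sum to all $x$ with $d_1(x,\hat x)\le D_1$ only increases the total, giving $\mathbb P[\text{success}\cap\mathcal E]\le\sum_{(\hat x,\hat y)\in\mathcal C}S_{\hat x,\hat y}$ with
\[
S_{\hat x,\hat y}=\sum_{x:\,d_1(x,\hat x)\le D_1}\ \sum_{y:\,d_2(y,\hat y)\le D_2}P_X(x)\,P_{Y|X}(y|x)\,\mathbbm{1}\{2^{\jmath}\ge M2^{\gamma}\},
\]
where $\jmath$ is evaluated at $(x,y,\hat x,\hat y,D_1,D_2)$. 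Using $\mathbbm{1}\{2^{\jmath}\ge M2^{\gamma}\}\le 2^{\jmath}/(M2^{\gamma})$ reduces matters to showing $T_{\hat x,\hat y}:=\sum_{x:\,d_1\le D_1}\sum_{y:\,d_2\le D_2}P_X(x)P_{Y|X}(y|x)2^{\jmath}\le 1$ for every reconstruction point. Expanding $\jmath$ via Definition~\ref{defntilt} gives $2^{\jmath}=\frac{P_{\hat X^{*}\hat Y^{*}|X}(\hat x,\hat y|x)}{P_{\hat X^{*}\hat Y^{*}}(\hat x,\hat y)}\,2^{\lambda_1(d_1(x,\hat x)-D_1)}\,2^{\lambda_2(d_2(y,\hat y)-D_2)}$, and on the distortion-success region both exponential factors are at most one because $\lambda_1,\lambda_2\ge 0$ (the sign is fixed by the monotonicity of $R(D_1,D_2)$ through \eqref{eq:lamdas}). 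The decisive feature of the indirect structure is that the mutual information density $\imath_{X;\hat X^{*}\hat Y^{*}}(x;\hat x,\hat y)$ does not depend on $y$, so after bounding the two exponentials by one I may extend the sums to all $(x,y)$, use $\sum_y P_{Y|X}(y|x)=1$, and finish with the marginalization $\sum_x P_X(x)P_{\hat X^{*}\hat Y^{*}|X}(\hat x,\hat y|x)=P_{\hat X^{*}\hat Y^{*}}(\hat x,\hat y)$ to obtain $T_{\hat x,\hat y}\le 1$. Summing over the at most $M$ codewords then yields $\mathbb P[\text{success}\cap\mathcal E]\le|\mathcal C|/(M2^{\gamma})\le 2^{-\gamma}$.

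Assembling the pieces, every fixed code satisfies $\epsilon\ge\mathbb P[\jmath_{X;\hat X^{*}\hat Y^{*}}(X,Y,\hat X,\hat Y,D_1,D_2)\ge\log M+\gamma]-2^{-\gamma}$ for all $\gamma\ge 0$, so taking the supremum over $\gamma$ is legitimate. Because the code is arbitrary and its induced reconstruction law is one admissible choice of $P_{\hat X\hat Y|X}$, the minimum over codes---namely $\epsilon^{*}(M,D_1,D_2)$---is lower bounded by the infimum of the right-hand side over all conditional laws $P_{\hat X\hat Y|X}$, which is exactly the claimed bound. The one routine loose end is the handling of reconstruction points with $P_{\hat X^{*}\hat Y^{*}}(\hat x,\hat y)=0$: there the corresponding conditional mass vanishes as well, so such points contribute nothing to $T_{\hat x,\hat y}$ (equivalently $\jmath=-\infty$) and the estimate is unaffected.
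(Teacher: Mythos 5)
Your proposal is correct and follows essentially the same route as the paper's proof: the complementary-event split you use is an algebraic rearrangement of the paper's decomposition of $\mathbb{P}[\jmath \geq \log M + \gamma]$ over the excess-distortion event and its complement, your Markov bound on the indicator together with bounding the tilted exponentials by one on the success region (using $\lambda_1,\lambda_2 \geq 0$) is exactly the paper's step $\mathbbm{1}\{a\leq b\}\leq 2^{c(b-a)}$, and your enlargement of decoding cells plus Bayes-rule marginalization matches the paper's bound $P_{U|X}(u|x)\leq 1$ followed by summing $P_{X|\hat{X}^*\hat{Y}^*}$ and $P_{Y|X}$ to one. Your explicit treatment of reconstruction points outside the support of $P_{\hat{X}^*\hat{Y}^*}$ is a minor point the paper leaves implicit, but it does not change the argument.
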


\begin{proof}
See Appendix \ref{apx:conv}.
\end{proof}
Note that if there are multiple distributions achieving the joint rate distortion function $R(D_1,D_2)$, then there are multiple joint $(D_1,D_2)$-tilted information functions and the converse bound in Theorem \ref{conv} holds for any of those. Furthermore, we can obtain a similar bound by replacing $\lambda_1$ and $\lambda_2$ in Definition \ref{defntilt} with arbitrary values. Hence, one can also obtain a tighter converse that includes these as optimization parameters similar to that of \cite[Theorem 2]{kostina2016noisy}. However, the converse bound presented here is simpler as it has less parameters to optimize and can be further simplified for the case of logarithmic loss as shown in Section \ref{sec:Logloss}.

\section{Logarithmic Loss for Reconstruction} \label{sec:Logloss}
Theorems~\ref{ach} and~\ref{conv} provide general achievability and converse results for arbitrary distortion measures. The converse result makes use of the asymptotic rate distortion function defined in (\ref{eq:RD}). Notice that from the expressions  (\ref{eq:r1d1}), (\ref{eq:r2d2}) and (\ref{eq:RD}), it is immediately clear that $R(D_1,D_2) \geq \max \left( R_1(D_1),R_2(D_2)\right)$. Below we show that when the direct distortion measure is logarithmic loss, i.e.
\begin{align} \label{logeq}
    d_1(x,\hat{x}) = \log \frac{1}{\hat{x}(x)}
\end{align}
where $\hat{\mathcal{X}} = \mathcal{P}_\mathcal{X}$ (reconstructions are probability distributions), then $R(D_1,D_2)=\max(R_1(D_1),R_2(D_2))$ in a large region of $(D_1,D_2)$. The rate distortion function for logarithmic loss is given by \cite{courtade2013}:
\begin{align}
    R_1(D_1) = H(X) - D_1
\end{align}
So any $P_{\hat{X}|X}$ distribution with $H(\hat{X}|X)=D_1$ achieves the rate distortion limit. This linear relation and flexibility are the properties of the logarithmic loss exploited in proving the following theorem.
\begin{theorem} \label{thmLogLoss}
Consider the joint inference and reconstruction problem in the asymptotic regime with $d_1(x,\hat{x})$ as logarithmic loss defined in (\ref{logeq}) and $d_2(y,\hat{y})$ with an arbitrary distortion measure. Then for all $D_2\geq D_{2,\min}$ and $D_1$ satisfying:
\begin{align}
    R_2(D_{2,\min}) \geq R_1(D_1) = H(X) - D_1,
\end{align}
we have
\begin{align}
    R(D_1,D_2) = \max(R_1(D_1),R_2(D_2)).
\end{align}
and the rate distortion achieving $\hat{X}$ and $\hat{Y}$  are functions of each other.
\end{theorem}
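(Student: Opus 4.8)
The plan is to prove the two inequalities separately. The lower bound $R(D_1,D_2) \ge \max(R_1(D_1),R_2(D_2))$ is already recorded in the text just before the theorem, since any test channel feasible for the joint problem is feasible for each marginal problem and $I(X;\hat{X},\hat{Y})$ dominates both $I(X;\hat{X})$ and $I(X;\hat{Y})$. So the whole task reduces to exhibiting, for each admissible $(D_1,D_2)$, a single test channel $P_{\hat{X}\hat{Y}|X}$ meeting both distortion constraints with $I(X;\hat{X},\hat{Y}) \le \max(R_1(D_1),R_2(D_2))$. The engine of the construction is the flexibility of logarithmic loss: for any auxiliary $U$ with $Y - X - U$, choosing the reconstruction $\hat{X} = P_{X|U}$ yields $\mathbb{E}[d_1(X,\hat{X})] = H(X|U)$, and because $R_1(D_1) = H(X) - D_1$ is linear, any target level $H(X|U) = D_1$ is simultaneously rate-optimal. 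I would first isolate this identity as a short lemma.

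I would then split into two regimes according to the sign of $R_2(D_2) - R_1(D_1)$. In the regime $R_2(D_2) \ge R_1(D_1)$, let $\hat{Y}^{*}$ attain $R_2(D_2)$, which exists because $D_2 \ge D_{2,\min}$, so that $H(X|\hat{Y}^{*}) = H(X) - R_2(D_2)$. Setting $\hat{X} = P_{X|\hat{Y}^{*}}$, a deterministic function of $\hat{Y}^{*}$, gives $\mathbb{E}[d_1(X,\hat{X})] = H(X) - R_2(D_2) \le H(X) - R_1(D_1) = D_1$, while $I(X;\hat{X},\hat{Y}^{*}) = I(X;\hat{Y}^{*}) = R_2(D_2)$ since $\hat{X}$ adds nothing beyond $\hat{Y}^{*}$. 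Here the reconstructions are trivially functionally related, $\hat{X}$ being a function of $\hat{Y}$.

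The harder regime is $R_2(D_2) < R_1(D_1)$, where now $H(X|\hat{Y}^{*}) = H(X) - R_2(D_2) > D_1$, so the posterior induced by the inference-optimal $\hat{Y}^{*}$ is too coarse to meet the direct constraint. The idea is to refine it: append to $\hat{Y}^{*}$ an independent \emph{erasure} variable $V$ that reveals $X$ with probability $\alpha$ and is blank otherwise, so that $H(X \mid \hat{Y}^{*}, V) = (1-\alpha)H(X|\hat{Y}^{*})$, and pick $\alpha = 1 - D_1/H(X|\hat{Y}^{*}) \in (0,1]$ to hit $H(X|U) = D_1$ exactly with $U = (\hat{Y}^{*},V)$. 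Taking $\hat{X} = P_{X|U}$ and $\hat{Y} = \hat{Y}^{*}$, both functions of $U$ with $Y - X - U$ preserved, yields $\mathbb{E}[d_1(X,\hat{X})] = D_1$, $\mathbb{E}[\tilde{d}_2(X,\hat{Y})] = \mathbb{E}[d_2(Y,\hat{Y}^{*})] \le D_2$, and, by the data-processing inequality, $I(X;\hat{X},\hat{Y}) \le I(X;U) = H(X) - D_1 = R_1(D_1)$. Combined with the lower bound this forces equality and gives $R(D_1,D_2) = R_1(D_1) = \max(R_1(D_1),R_2(D_2))$.

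I expect the main obstacle to be this second regime, specifically pinning the refined auxiliary $U$ to the exact level $H(X|U) = D_1$ while keeping the Markov structure $Y - X - U$ so the inference distortion is untouched, and then securing the claimed functional dependence between $\hat{X}$ and $\hat{Y}$. For the latter, since both are functions of the common $U$, it remains to merge the cells of $U$ that induce identical posteriors $P_{X|U}$ so that $\hat{Y}$ becomes recoverable from $\hat{X}$ without changing either distortion; this is where I anticipate the bookkeeping. The hypothesis $R_2(D_{2,\min}) \ge R_1(D_1)$ enters to guarantee that at the boundary $D_2 = D_{2,\min}$ we begin in the first (clean) regime, fixing $D_1$ in the range where the case analysis stays admissible as $D_2$ grows.
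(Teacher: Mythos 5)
Your proof of the rate equality is sound. In the regime $R_2(D_2)\ge R_1(D_1)$ your construction coincides with the paper's, and the merging of duplicate posteriors (which in that regime are cells of $\hat{Y}^{*}$ itself) finishes the bijection claim just as in the paper. In the hard regime $R_2(D_2)<R_1(D_1)$ you take a genuinely different route: you keep the $R_2(D_2)$-achieving channel and graft on an independent erasure variable $V$ to drive $H(X\mid \hat{Y}^{*},V)$ down to exactly $D_1$. The paper instead spends the hypothesis $R_2(D_{2,\min})\ge R_1(D_1)$, together with continuity of $R_2$, to pick $D_2'\le D_2$ with $R_2(D_2')=R_1(D_1)$; the channel $Y'$ achieving $R_2(D_2')$ then automatically satisfies $H(X\mid Y')=D_1$, so $\hat{X}=P_{X\mid Y'}$ requires no refinement at all. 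Your route even proves the rate claim without the hypothesis, which is a genuine strengthening of that half of the statement --- but that is also the warning sign, because the hypothesis is exactly what makes the \emph{second} half provable.

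The gap is in the functional-dependence claim, and your proposed bookkeeping cannot close it. In your regime-two achiever, $\hat{X}=P_{X\mid U}$ with $U=(\hat{Y}^{*},V)$, neither direction of the relation holds in general: $\hat{X}$ is not a function of $\hat{Y}=\hat{Y}^{*}$ (it depends on $V$), and $\hat{Y}$ is not a function of $\hat{X}$, because whenever $V$ reveals $X=x$ the posterior is the point mass at $x$ \emph{irrespective of} $\hat{Y}^{*}$. Merging cells of $U$ with identical posteriors makes the second failure permanent: the merged cell ``$X=x$ revealed'' retains no trace of $\hat{Y}^{*}$, so $\hat{Y}$ must be redefined on it, and for $\hat{X}$ to be a function of $\hat{Y}$ that redefinition must be injective across up to $|\mathcal{X}|$ distinct point-mass cells --- impossible whenever $|\hat{\mathcal{Y}}|<|\mathcal{X}|$, which is the typical situation the paper targets (its own example has $|\mathcal{X}|=70$, $|\hat{\mathcal{Y}}|=10$) and is fully compatible with the theorem's hypothesis. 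This is not a cosmetic defect: the one-to-one relation is precisely what the proof of Corollary \ref{convLogLoss} uses to write $P_{X\mid\hat{X}^{*}\hat{Y}^{*}}=P_{X\mid\hat{Y}^{*}}$ and $\hat{x}^{*}(x)=P_{X\mid\hat{Y}^{*}}(x\mid\hat{y}^{*})$. Relatedly, your reading of the hypothesis is inverted: it is not there to ensure that the boundary $D_2=D_{2,\min}$ lies in the clean regime, but is consumed in the hard regime, via the intermediate value theorem applied to $R_2$ on $[D_{2,\min},D_2]$, to produce the level $D_2'$ above. Replacing your erasure step with that construction closes the gap.
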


\begin{proof}
    See Appendix \ref{apx:loglossUniv}.
\end{proof}

Note that if $R_2(D_{2,\min}) = H(X)$, then the theorem holds for all $(D_1,D_2)$. A similar property for logarithmic loss was shown in \cite{no2015}, where the authors prove any discrete memoryless source is successively refinable when the distortion measure is logarithmic loss.

\begin{corollary} \label{convLogLoss}
    Consider the single-shot joint inference and reconstruction problem with direct source $X$ and indirect source $Y$ with joint distribution $P_{XY}$. Suppose that direct distortion measure $d_1(x,\hat{x})$ is logarithmic loss and the indirect
    distortion measure $d_2(y,\hat{y})$ is arbitrary. Further assume that $(D_1,D_2)$ satisfies the condition for Theorem~\ref{thmLogLoss}. Then for $R_2(D_2) < H(X) - D_1$,
\begin{align}
\epsilon^*(M,D_1,D_2) \geq \sup_{\gamma\geq0}\left\{ \mathbb{P}\left[ \imath_X(X) \geq D_1 + \log M + \gamma \right] - 2^{-\gamma}\right\} \label{eq:cor2p1}
\vspace{-0.7cm}
\end{align}
and for $R_2(D_2) \geq H(X) - D_1$,
\begin{align}
\begin{split}
    \epsilon^*(M,D_1,&D_2) \geq \inf_{P_{\hat{Y}|X}}\sup_{\gamma\geq0} \\
    &\left\{ \mathbb{P}\left[ \jmath_{X;\hat{Y}^*}(X,Y,\hat{Y},D_2) \geq \log M + \gamma \right] - 2^{-\gamma}\right\}. \label{eq:cor2p2}
\end{split}
\end{align}
\end{corollary}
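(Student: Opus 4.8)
The plan is to specialize the general converse of Theorem~\ref{conv} using the structural results of Theorem~\ref{thmLogLoss}. Under the stated hypotheses $R(D_1,D_2)=\max(R_1(D_1),R_2(D_2))$ with $R_1(D_1)=H(X)-D_1$, and an achiever can be taken in which $\hat{X}$ and $\hat{Y}$ are deterministic functions of each other. The first step is to read off the multipliers $\lambda_1,\lambda_2$ in~(\ref{eq:lamdas}) from whichever branch of the maximum is locally active. If $R_2(D_2)<R_1(D_1)$, then near $(D_1,D_2)$ the maximum equals $R_1(D_1')=H(X)-D_1'$ and is flat in $D_2'$, so $\lambda_1=1$ and $\lambda_2=0$; if $R_2(D_2)>R_1(D_1)$, it equals $R_2(D_2')$ and is flat in $D_1'$, so $\lambda_1=0$ and $\lambda_2=\lambda_2'$, the indirect multiplier associated with~(\ref{eq:r2d2}).

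The second step is to collapse the joint tilted information of Definition~\ref{defntilt} under these multipliers. In the regime $R_2(D_2)\geq H(X)-D_1$, putting $\lambda_1=0$ deletes the $d_1$ term; since $\hat{X}$ is a function of $\hat{Y}$, the density $\imath_{X;\hat{X}^*\hat{Y}^*}(X;\hat{X},\hat{Y})$ reduces to $\imath_{X;\hat{Y}^*}(X;\hat{Y})$, so $\jmath_{X;\hat{X}^*\hat{Y}^*}$ becomes exactly the indirect $D_2$-tilted information $\jmath_{X;\hat{Y}^*}(X,Y,\hat{Y},D_2)$, and the same functional dependence rewrites $\inf_{P_{\hat{X}\hat{Y}|X}}$ as $\inf_{P_{\hat{Y}|X}}$, giving~(\ref{eq:cor2p2}). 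In the regime $R_2(D_2)<H(X)-D_1$, putting $\lambda_2=0$ deletes the $d_2$ term and leaves $\imath_{X;\hat{X}^*\hat{Y}^*}(X;\hat{X},\hat{Y})+d_1(X,\hat{X})-D_1$. Here I would use the defining feature of logarithmic loss, that the rate--distortion optimal reconstruction is the Bayes posterior $\hat{X}(\cdot)=P_{X|\hat{X}^*\hat{Y}^*}(\cdot|\hat{X},\hat{Y})$; substituting $d_1(X,\hat{X})=\log\frac{1}{P_{X|\hat{X}^*\hat{Y}^*}(X|\hat{X},\hat{Y})}$ makes the density and the distortion telescope to $\imath_{X;\hat{X}^*\hat{Y}^*}(X;\hat{X},\hat{Y})+d_1(X,\hat{X})=\imath_X(X)$, so the tilted information reduces to the $X$-only quantity $\imath_X(X)-D_1$ and~(\ref{eq:cor2p1}) follows.

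I expect two steps to need the most care. The first is that $\max(R_1(D_1),R_2(D_2))$ is non-differentiable exactly on the crossover $R_1(D_1)=R_2(D_2)$, where Definition~\ref{defntilt} does not literally apply; I would argue in the interior of each regime, where a single branch is active and smooth, and reach the boundary (which the statement assigns to~(\ref{eq:cor2p2})) through one-sided derivatives and continuity of the bounds in $(D_1,D_2)$. The second and genuine obstacle is the infimum in Theorem~\ref{conv} in the regime of~(\ref{eq:cor2p1}): the telescoping holds only when the reconstruction feeding $\jmath$ is the Bayes posterior, so I must show that this reconstruction attains the infimum, i.e.\ that no reconstruction channel drives $\sup_{\gamma\geq0}\{\mathbb{P}[\jmath\geq\log M+\gamma]-2^{-\gamma}\}$ below $\sup_{\gamma\geq0}\{\mathbb{P}[\imath_X(X)\geq D_1+\log M+\gamma]-2^{-\gamma}\}$; once it is pinned there, $\jmath=\imath_X(X)-D_1$ no longer depends on $(\hat{X},\hat{Y})$ and the infimum is inert. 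A clean cross-check that the bound is correct is monotonicity: imposing both distortion constraints is at least as hard as imposing the reconstruction constraint alone, so $\epsilon^*(M,D_1,D_2)$ is lower bounded by the single-shot logarithmic-loss reconstruction converse, which is precisely~(\ref{eq:cor2p1}).
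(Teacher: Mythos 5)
Your proposal follows essentially the same route as the paper's own proof: read off $(\lambda_1,\lambda_2)$ from the active branch of $R(D_1,D_2)=\max(R_1(D_1),R_2(D_2))$ (giving $\lambda_1=1,\lambda_2=0$ in the first regime and $\lambda_1=0$ in the second), then use Theorem~\ref{thmLogLoss}'s one-to-one relationship between $\hat{X}^*$ and $\hat{Y}^*$, with $\hat{X}^*$ equal to the Bayes posterior, to telescope the joint tilted information to $\imath_X(X)-D_1$ for~(\ref{eq:cor2p1}) and to collapse it to the indirect $D_2$-tilted information for~(\ref{eq:cor2p2}). The two technical points you flag (behavior at the crossover $R_1(D_1)=R_2(D_2)$, and whether the telescoping identity survives the infimum over the argument distribution $P_{\hat{X}\hat{Y}|X}$) are glossed over in the paper's proof, so your treatment is, if anything, more careful.
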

\begin{proof}
See Appendix \ref{apx:cor2}.
\end{proof}
The converse bound for logarithmic loss presented in Corollary \ref{convLogLoss} follows from the properties of the asymptotic rate distortion function in Theorem \ref{thmLogLoss}. However, for single-shot achievability it is not directly obvious how Theorem \ref{thmLogLoss} could be useful. Furthermore, specializing Theorem \ref{ach} to logarithmic loss is not straightforward. Below we first introduce a new achievability bound for the indirect single-shot lossy source coding problem (Theorem \ref{Ach2}) and then extend it to joint inference and reconstruction where the direct distortion metric is logarithmic loss (Theorem \ref{thm:achLog}).

\begin{theorem} \label{Ach2}
    Consider an observed source $X$ and an indirect source $Y$ with joint distribution $P_{XY}$. Then for any $P_{\hat{Y}}$ and $0\leq\epsilon'\leq1$, there exists an $(M,D_1=\infty,D_2,\epsilon)$-code such that
    \begin{align}
        \epsilon = \mathbb{P}[d_2(Y,\hat{Y})>D_2] \leq \epsilon'\left(1-\mathbb{E}\left[\eta(\epsilon')^M\right]\right)+\mathbb{E}\left[\eta(\epsilon')^M \right]
    \end{align}
    where the expectation is with respect to marginal distribution of $X$ and
    \begin{align}
        \eta(\epsilon') &= \mathbb{P}[\pi'(X,\hat{Y})>\epsilon'|X] \label{eq:eta}\\
        \pi'(x,\hat{y}) &= \mathbb{P}[d_2(Y,\hat{y})>D_2|X=x].
    \end{align}
\end{theorem}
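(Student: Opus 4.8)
The plan is to use a random coding argument tailored to the indirect setting, where only $Y$ must be inferred since $D_1 = \infty$ removes any constraint on reconstructing $X$. First I would draw a codebook $\mathcal{C} = \{\hat{Y}_1,\ldots,\hat{Y}_M\}$ of $M$ codewords i.i.d.\ according to the given test distribution $P_{\hat{Y}}$, independently of $(X,Y)$, with the decoder simply outputting the codeword indexed by the encoder. The crux is the encoding rule: upon observing $X=x$, the encoder declares a codeword $\hat{y}_j$ \emph{good} if its conditional excess distortion probability satisfies $\pi'(x,\hat{y}_j)\le\epsilon'$, and transmits the index of any good codeword when one exists; if none exists, it transmits a default index.

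Next I would bound the conditional excess distortion probability given $X=x$ and the realized codebook $\mathcal{C}$. Because the codebook is independent of $(X,Y)$, the selected codeword $\hat{y}$ is a deterministic function of $x$ and $\mathcal{C}$, so the conditional probability of exceeding $D_2$ over the law of $Y\mid X=x$ equals $\pi'(x,\hat{y})$. By construction this is at most $\epsilon'$ whenever a good codeword exists and at most $1$ otherwise. Taking the expectation over the codebook with $x$ fixed, the event that \emph{no} codeword is good factorizes into $M$ independent events, each of probability $\eta(\epsilon')$ evaluated at $X=x$; hence no good codeword exists with probability $\eta(\epsilon')^M$. This yields
\begin{align}
\mathbb{E}_{\mathcal{C}}\left[\mathbb{P}\left[d_2(Y,\hat{Y})>D_2 \mid X=x, \mathcal{C}\right]\right] \le \epsilon'\left(1 - \eta(\epsilon')^M\right) + \eta(\epsilon')^M,
\end{align}
with $\eta(\epsilon')$ evaluated at $x$.

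Finally I would average over $X$ to obtain the stated bound $\epsilon'(1-\mathbb{E}[\eta(\epsilon')^M])+\mathbb{E}[\eta(\epsilon')^M]$, using that $\epsilon'$ is a deterministic design parameter while $\eta(\epsilon')$ is a function of $X$. Since the random codebook achieves this value in expectation, a standard averaging argument over the ensemble guarantees the existence of at least one deterministic codebook whose excess distortion probability meets the bound, establishing the claimed $(M,\infty,D_2,\epsilon)$-code.

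The step requiring the most care is the conditioning structure in the second paragraph: one must verify that the ``a good codeword exists'' event depends only on $(X,\mathcal{C})$ and not on $Y$, so that conditioning on it leaves the law of $Y\mid X$ unchanged and the factorization of the failure probability into $\eta(\epsilon')^M$ is legitimate. The independence of the codebook from the source pair is what makes this clean, and the free parameter $\epsilon'$ then lets one trade off the ``typical'' distortion level $\epsilon'$ against the codebook-failure penalty $\mathbb{E}[\eta(\epsilon')^M]$.
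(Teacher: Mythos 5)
Your proposal is correct and follows essentially the same route as the paper's proof: the same threshold-based encoder (transmit any codeword $\hat{y}$ with $\pi'(x,\hat{y})\le\epsilon'$, arbitrary index otherwise), the same decomposition of the conditional excess distortion probability into the ``good codeword exists'' case (bounded by $\epsilon'$) and its complement (bounded by $1$), and the same factorization of the no-good-codeword probability into $\eta(\epsilon')^M$ via the i.i.d.\ codebook. Your explicit attention to the conditioning structure and the final derandomization step are points the paper leaves implicit, but the argument is the same.
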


\begin{proof}
    See Appendix \ref{apx:alt}.
\end{proof}
The difference between Theorem \ref{ach} (specialized to $D_1=\infty$) and Theorem \ref{Ach2} lies in the design of encoders in their proofs. The encoder in Theorem \ref{ach} maps each symbol $x$ to the codeword that minimizes probability of excess distortion, while the encoder of Theorem \ref{Ach2} maps the each symbol to any of the codewords for which probability of excess distortion is below some threshold $\epsilon'$. In the special case of $Y=X$, the encoders are equivalent to each other.
\begin{theorem} \label{thm:achLog}
Consider the single-shot joint inference and reconstruction problem with direct source $X$ and indirect source $Y$ with joint distribution $P_{XY}$. Suppose that direct distortion measure $d_1(x,\hat{x})$ is logarithmic loss and the indirect distortion measure $d_2(y,\hat{y})$ is arbitrary. Then
    \begin{align}
    \begin{split}
        \epsilon^*(M,D_1,D_2) &\leq \inf_{P_{\hat{Y}}}\inf_{\gamma\geq0}\inf_{0\leq\epsilon'\leq1} \left\{ \epsilon'\left(1-\mathbb{E}\left[\eta(\epsilon')^M\right]\right) \right. \\
        &+ \mathbb{E}\left[\eta(\epsilon')^M \right](1+2^{1-\gamma})  \\
        &+ 2^{1-\gamma}\sum_{k=1}^{M} {M \choose k}\frac{M}{k} \mathbb{E}[\eta(\epsilon')^{M-k}(1-\eta(\epsilon'))^k] \\
        &+ \mathbb{P}[\imath_X(X) > D_1 + \log M - \gamma] \} \label{eq:thmlast}
    \end{split}
    \end{align}
where $\eta(\epsilon')$ is defined in (\ref{eq:eta}).
\end{theorem}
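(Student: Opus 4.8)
The plan is to stack a logarithmic-loss reconstruction on top of the indirect single-shot code of Theorem~\ref{Ach2} and then decouple the two distortion events by a union bound. Concretely, I would reuse the random ensemble of Theorem~\ref{Ach2}: draw $M$ codewords $\hat Y_1,\dots,\hat Y_M$ i.i.d.\ from $P_{\hat Y}$, and let the encoder map $x$ to an index $j$ whose conditional indirect excess-distortion probability obeys $\pi'(x,\hat Y_j)\le\epsilon'$, declaring a failure (say index $1$) when no admissible codeword exists. Because the direct distortion is logarithmic loss, the decoder may output any element of $\mathcal P_{\mathcal X}$; I set $g_1(j)$ to the posterior $P_{X\mid f(X)=j}$, which the decoder can form from the shared codebook. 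The first subgoal is the reduction $\epsilon\le \mathbb P[d_2(Y,\hat Y)>D_2]+\mathbb P[d_1(X,\hat X)>D_1]$, after which Theorem~\ref{Ach2} supplies the indirect term as $\epsilon'(1-\mathbb E[\eta(\epsilon')^M])+\mathbb E[\eta(\epsilon')^M]$, already producing two of the four summands in \eqref{eq:thmlast}.

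The heart of the argument is the direct term. Writing $Q(x):=\mathbb P[f(X)=f(x)]$ for the mass of the cell containing $x$, the posterior reconstruction collapses the logarithmic loss to the identity $d_1(x,\hat X_{f(x)})=\imath_X(x)+\log Q(x)$, so the direct excess event is exactly $\{\,Q(x)>P_X(x)\,2^{D_1}\,\}$. Introducing the slack $\gamma$, I would split this as $\{\imath_X(X)>D_1+\log M-\gamma\}\cup\{Q(X)>2^{\gamma}/M\}$; the inclusion is elementary, since when $Q(x)\le 2^{\gamma}/M$ and $\imath_X(x)\le D_1+\log M-\gamma$ the identity forces $d_1(x,\hat X_{f(x)})\le D_1$. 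The first event is independent of the codebook and contributes the clean information term $\mathbb P[\imath_X(X)>D_1+\log M-\gamma]$, matching the last summand of \eqref{eq:thmlast}.

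It then remains to bound the ``overloaded cell'' probability $\mathbb P[\,Q(X)>2^{\gamma}/M\,]$. The plan is to condition on the number $K$ of codewords admissible for the source realization; given $X$, this count is $\mathrm{Binomial}(M,1-\eta(\epsilon'))$, which is precisely what generates the weights $\binom Mk\,\eta(\epsilon')^{M-k}(1-\eta(\epsilon'))^{k}$ in \eqref{eq:thmlast}. For $k\ge1$, a Markov estimate of the cell mass against the threshold $2^{\gamma}/M$ yields a factor of order $\tfrac{M}{k}2^{-\gamma}$ --- the $1/k$ reflecting that the encoder spreads $x$ over its $k$ admissible indices and the $M2^{-\gamma}$ being the reciprocal of the balanced-cell threshold --- while the empty case $k=0$ enters the direct bound through the same Markov step applied to the failure cell, contributing $2^{1-\gamma}\mathbb E[\eta(\epsilon')^M]$ (distinct from the indirect failure charge, since failure of the indirect code does not force a logarithmic-loss failure). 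Averaging over the ensemble assembles exactly $2^{1-\gamma}\big(\mathbb E[\eta(\epsilon')^M]+\sum_{k=1}^{M}\binom Mk\tfrac Mk\,\mathbb E[\eta(\epsilon')^{M-k}(1-\eta(\epsilon'))^{k}]\big)$; adding the information term and the indirect term, and collecting the two $\mathbb E[\eta(\epsilon')^M]$ contributions into the coefficient $1+2^{1-\gamma}$, reproduces \eqref{eq:thmlast}. A standard random-coding step then guarantees a deterministic code meeting the ensemble average, and the infima over $P_{\hat Y}$, $\gamma\ge0$, and $0\le\epsilon'\le1$ are free because the construction is valid for each choice.

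The step I expect to be the main obstacle is this overloaded-cell bound. The cell masses $\mathbb P[f(X)=j]$ are themselves random and, crucially, statistically coupled across source symbols through the single shared codebook, so they cannot be treated as independent. The technical crux is to show that this coupling can nonetheless be tamed symbol-by-symbol, reducing the analysis to the marginal law of $K$ together with a one-dimensional Markov inequality so that the $\tfrac Mk$ weights emerge cleanly; getting the constant right --- the $2^{1-\gamma}$ rather than a looser $M2^{-\gamma}$ --- is exactly where the uniform-among-admissible assignment and the posterior reconstruction must be exploited rather than bounded crudely.
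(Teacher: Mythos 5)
Your attempt follows the paper's overall architecture (the Theorem~\ref{Ach2} ensemble for the indirect part, a three-way union bound, a $\gamma$-slack isolating the information term), but your treatment of the direct term is genuinely different from the paper's, and it is where the proof fails. The paper never uses the posterior reconstruction $g_1(j)=P_{X\mid f(X)=j}$. Instead it introduces an auxiliary uniform binning $b(x)\in\{1,\ldots,\lfloor 2^{D_1}\rfloor\}$ and lets $g_1(i)$ be the \emph{uniform} distribution over the symbols of $\mathcal{L}=\{x:\imath_X(x)\le D_1+\log M-\gamma\}$ that are uniquely identified by their (index, bin) pair; every covered symbol then satisfies $d_1\le\log\lfloor 2^{D_1}\rfloor\le D_1$ deterministically, irrespective of probability masses, and the only failure mode is a collision, bounded by counting via $|\mathcal{L}|\le M2^{D_1}2^{-\gamma}$. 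That counting argument is what produces the $\frac{M}{k}2^{1-\gamma}$ weights in (\ref{eq:thmlast}).

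The concrete gap in your route: the claim that $\mathbb{P}[Q(X)>2^{\gamma}/M]$ is bounded by the collision summand of (\ref{eq:thmlast}) is false, because $Q(x)\ge P_X(x)$ always. Every symbol with $P_X(x)>2^{\gamma}/M$, i.e.\ $\imath_X(x)<\log M-\gamma$, triggers your overloaded-cell event with probability one, even though such heavy symbols meet the log-loss constraint with room to spare (for them $d_1=\imath_X(x)+\log Q(x)\le\imath_X(x)$ is small). Concretely, let $P_X$ have an atom of mass $1/2$, take $\epsilon'=1$ so that $\eta(\epsilon')=0$, and set $\gamma=3$, $M=32$: your claimed bound evaluates to $2^{1-\gamma}=1/4$, yet $\mathbb{P}[Q(X)>2^{\gamma}/M]\ge\mathbb{P}[P_X(X)>1/4]\ge1/2$. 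The lossy move is replacing the symbol-dependent excess threshold $P_X(x)2^{D_1}$ by the uniform threshold $2^{\gamma}/M$: Markov applied to $\mathbb{E}[Q\mid X=x,\,|\mathcal{C}_x|=k]\le P_X(x)+1/k$ yields not only your $\frac{M}{k}2^{-\gamma}$ term but also an uncontrolled term $M2^{-\gamma}P_X(x)$ that survives averaging over $X$. (By contrast, the cross-symbol coupling of cell masses that you flag as the main obstacle is harmless, since all bounds can be applied conditionally on $X=x$ and then averaged.) A repair inside your framework is to keep the per-symbol threshold and bound $\mathbb{P}[\,Q(X)-P_X(X)>P_X(X)(2^{D_1}-1),\ \imath_X(X)\le D_1+\log M-\gamma\,]$ by Markov using $\mathbb{E}[Q(f(x))-P_X(x)\mid X=x,\,|\mathcal{C}_x|=k]\le 1/k$; this gives $\frac{M}{k}2^{-\gamma}\frac{2^{D_1}}{2^{D_1}-1}$, which matches the theorem's coefficient only when $D_1\ge1$. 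The paper's binning construction attains the constant $2^{1-\gamma}$ for all $D_1$ precisely because its distortion guarantee is cardinality-based rather than mass-based.
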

\begin{proof}
    See Appendix \ref{apx:achLog}.
\end{proof}
Optimization over the joint distribution $P_{\hat{X}\hat{Y}}$ in Theorem \ref{ach} can be challenging due to the fact that $\hat{\mathcal{X}}=\mathcal{P}_{\mathcal{X}}$, the set of all probability distributions on $\mathcal{X}$. Specifically, considering the case where $|\mathcal{X}|$ is large compared to $|\mathcal{Y}|$, which holds in most of the joint inference and reconstruction problems such as the image example, the reduction of optimization over just $P_{\hat{Y}}$ would be beneficial. We also note that in when $D_1$ and $M$ are small, the last term in (\ref{eq:thmlast}) dominates and it is close to 1, hence the bound in Theorem \ref{thm:achLog} is more useful when the distortion constraint on the source $X$ is not that strict.

\subsection{Example}
We illustrate lower and upper bounds in Corollary \ref{convLogLoss} and Theorem \ref{thm:achLog} through an example. Our example further highlights the computational advantage of Theorem \ref{thm:achLog}. Suppose that $Y\sim\mathrm{Uniform}\{0,\ldots,m-1\}$ and for a $0\leq p\leq1$, $P_{X|Y}$ is given as
\begin{align}
    P_{X|Y}(x|y) = \begin{cases}
        \phi(x-yn), &x=y(n+1),\ldots,y(n+1)+n  \\
        0, &\mathrm{otherwise} \label{eq:expl}
    \end{cases}
\end{align}
where $n>1$ is an integer and
\begin{align}
    \phi(x) = \begin{cases}
        {n \choose x}p^x (1-p)^{n-x}, &x=0,\ldots,n \\
        0, &\mathrm{otherwise}.
    \end{cases}
\end{align}
We assume that $d_1(x,\hat{x})$ is logarithmic loss as in (\ref{logeq}) and $d_2(y,\hat{y})= \mathbbm{1}\{y\neq \hat{y}\}$ is the Hamming distortion. 
Note that $X$ is a Binomial($n,p$) random variable whose alphabet is determined by its class, represented by $Y$. Also note that $Y$ is a deterministic function of $X$. Due to this deterministic relationship the indirect rate distortion function $R_2(D_2)$ simplifies to a rate distortion function of a uniform discrete random variable under Hamming distortion\cite{coverthomas}:
\begin{align}
    R_2(D_2) = \log m - h(D_2) - D_2 \log(m-1)
\end{align}
where $h(D_2)=-D_2\log D_2 - (1-D_2)\log (1-D_2)$ is the binary entropy function. The converse bound in (\ref{eq:cor2p2}) simplifies to \cite{kostina2012}:
\begin{align}
    \epsilon^*(M,D_1,D_2) \geq 1 - M/m.
\end{align}

We pick $\hat{Y}\sim \mathrm{Uniform}\{0,\ldots,m-1\}$ in Theorem \ref{thm:achLog} due to the symmetry of the problem. It is straightforward to show that for any $\epsilon'<1$ and $D_2<1$ we have $\eta(\epsilon')=(m-1)/m$. Hence the infimum in (\ref{eq:thmlast}) is achieved by $\epsilon'=0$ and the achievability simplifies to
\begin{align}
\begin{split}
    \epsilon^*(M,D_1,D_2) &\leq \inf_{\gamma\geq0} \left\{ \left(\frac{m-1}{m}\right)^M(1+2^{1-\gamma}) \right.  \\ 
    +&\left. 2^{1-\gamma}\sum_{k=1}^{M} {M \choose k}\frac{M}{k} \left(\frac{m-1}{m}\right)^{M-k}\left(\frac{m-1}{m}\right)^k \right. \\ 
    +& \mathbb{P}[\imath_X(X) > D_1 + \log M - \gamma] \Biggl\}
\end{split}
\end{align}
 Figure ~\ref{fig:numResult}  shows aforementioned bounds against $M$ for $m=10$, $n=6$, $p=0.1$ $D_1=6$ and $D_2<1$ so that $R_2(D_2) \geq H(X) - D_1$. It can be seen that we are able to obtain nontrivial bounds for a large portion of $M$ values. Notice that although $|\mathcal{X}|=70$ and $|\mathcal{Y}|=10$ the numerical calculation of the bounds is straightforward unlike the general bound in Theorem \ref{ach} which contains infimum over $P_{\hat{X}\hat{Y}}$.
\begin{figure}[t]
    \centering
    \includegraphics[width=0.48\textwidth]{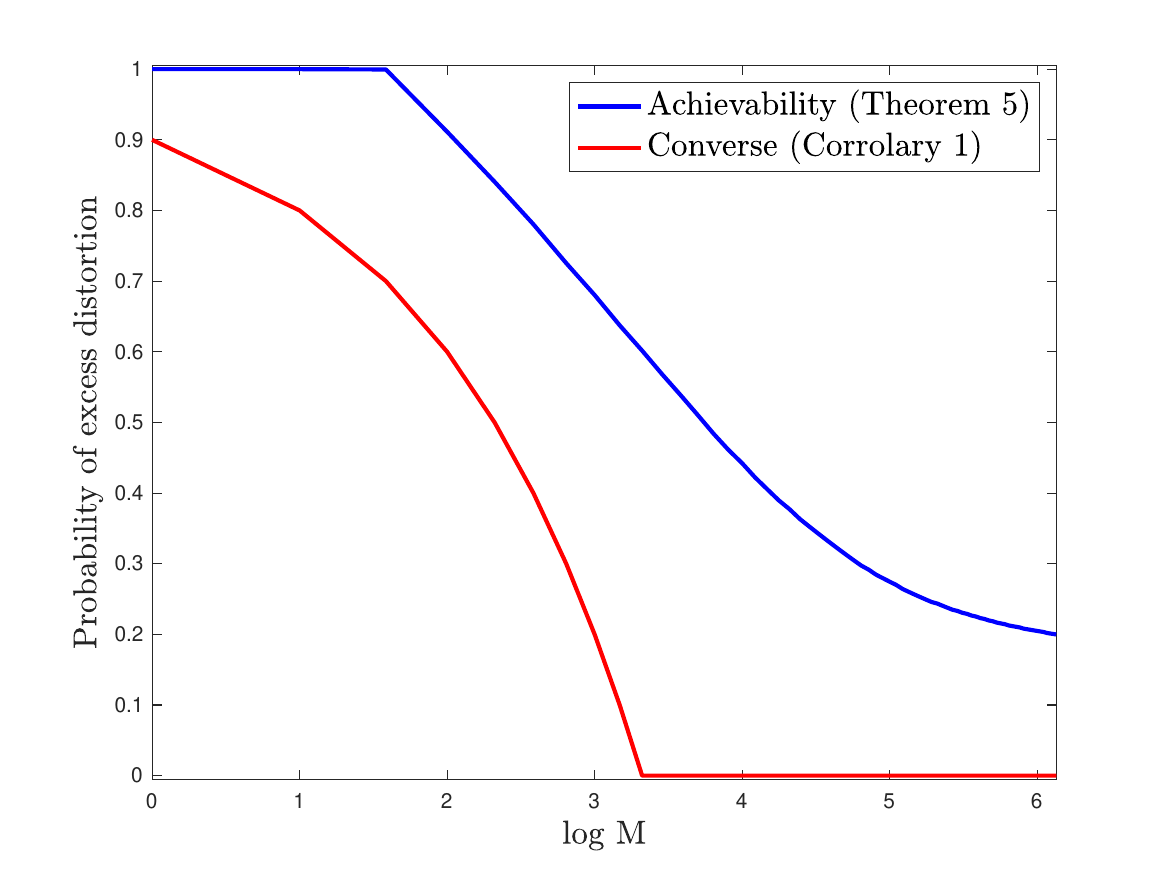}
    \caption{Bounds for minimum excess distortion probability $\epsilon^*(M,D_1,D_2)$ where $Y\sim\mathrm{Uniform}\{0,\ldots,9\}$ and distribution of $X$ is given in (\ref{eq:expl}) with $n=6$, $p=0.1$, $D_1=6$, $D_2<1$.}
    \label{fig:numResult}
\end{figure}

\section{Conclusion} \label{sec:conc}
In this work, we have studied the joint inference and reconstruction problem, and characterized upper and lower bounds to the excess distortion probability. We have obtained a simplified converse bound for the special case when the reconstruction distortion measure is logarithmic loss. Using an alternate achievability for the indirect single-shot source coding problem, we have obtained a specialized achievability bound for joint inference and reconstruction problem where the reconstruction is under logarithmic loss. Finally, we have provided a numerical example highlighting computational advantage of this new achievability bound. One possible avenue for future exploration is to obtain a tighter bound without compromising on the computational benefit.

\appendices
\section{Proof of Theorem \ref{ach}} \label{apx:ach}
    Fix a $P_{\hat{X}\hat{Y}}$ on $\hat{\mathcal{X}}\times \hat{\mathcal{Y}}$. Pick a random codebook with $M$ codeword pairs $\{(c_{1,1},c_{1,2}),\ldots,(c_{M,1},c_{M,2})\}$ generated independently from $P_{\hat{X}\hat{Y}}$. Given a source realization $x$, encoder simply sends the index of the codeword $f(x) = i^*$
    \begin{align}
        i^* \in \argmin_{i \in {1,\ldots,M}} \pi(x,c_{i,1},c_{i,2}) \label{enc1}
    \end{align}
    where ties are broken arbitrarily. After receiving the index, joint decoder outputs the codeword pair with given index $g(i)= (g_1(i),g_2(i))=(c_{i,1},c_{i,2})$. For the sake of simplicity we will use $\pi(x,g(f(x)))$ in place of $\pi(x,g_1(f(x)),g_2(f(x)))$. Denote the excess distortion event:
    \begin{align}
        E : \{d_1(x,g_1(f(x)))>D_1\} \cup \{d_2(y,g_2(f(x)))>D_2\}.
    \end{align}
     We can write the excess distortion probability of this scheme as follows:
    \begin{align}
        \mathbb{P}[E] &= \mathbb{E}[\pi(X,g(f(X)))] \\
        &= \mathbb{E} \left[ \int_0^1 \mathbbm{1}\left\{\pi(X,g(f(X)))>t\right\} dt \right] \\
        &= \int_0^1 \mathbb{P}\left[ \pi(X,g(f(X)))>t \right]dt \\
        &= \int_0^1 \mathbb{E} \left[\mathbb{P}\left[ \pi(X,g(f(X)))>t | X \right] \right]dt \\
        &= \int_0^1 \mathbb{E} \left[ \prod_{i=1}^{M} \mathbb{P}\left[ \pi(X,c_{i,1},c_{i,2})>t | X \right] \right]dt \label{thm1pf}
    \end{align} 
    where (\ref{thm1pf}) is obtained by the following observation:
    \begin{align}
        \mathbbm{1}\{ \pi(x,g(f(x))) > t \} &= \mathbbm{1}\left\{ \min_i \pi(x,c_{i,1},c_{i,2}) > t \right\} \\ \label{thm1pf2}
        &= \prod_{i=1}^M \mathbbm{1}\{ \pi(x,c_{i,1},c_{i,2}) > t \}.
    \end{align}
    Taking the average with respect to codewords drawn i.i.d. from $P_{\hat{X}\hat{Y}}$ and independent from $X$ and then taking infimum to find tightest bound gives us the right hand side of (\ref{eq:ach}).

\section{Proof of Theorem \ref{conv}}  \label{apx:conv}
For this proof we drop the subscript of $\jmath_{X;\hat{X}^*,\hat{Y}^*}(x,y,\hat{x},\hat{y},D_1,D_2)$ for clarity. For an arbitrary single-shot joint lossy source code $(f,g)$, let $f(X)=U$ and $g(U) = (\hat{X},\hat{Y})$. Note that $U$ can only take values in $\{1,\ldots,M\}$. Define the event:
\begin{align}
     E : \{d_1(x,\hat{x})>D_1\} \cup \{d_2(y,\hat{y})>D_2\}.
\end{align}
Let the complement of $E$ be $\Bar{E}$. Then for any $\gamma \geq 0$:
\begin{align}
    \mathbb{P}[\jmath(X,Y,\hat{X}&,\hat{Y},D_1,D_2)\geq \log M + \gamma] = \\ &\mathbb{P}[\jmath(X,Y,\hat{X},\hat{Y},D_1,D_2) \geq \log M + \gamma , E] \\
    +&\mathbb{P}[\jmath(X,Y,\hat{X},\hat{Y},D_1,D_2) \geq \log M + \gamma , \Bar{E}] \\
    \leq& \mathbb{P}(E) \\
    +&\mathbb{P}[\jmath(X,Y,\hat{X},\hat{Y},D_1,D_2) \geq \log M + \gamma , \Bar{E}] \label{convTerm1}
\end{align}
Now we bound the second term in (\ref{convTerm1}). We can write it is as follows:
\begin{align}
\begin{split}
    \sum_{y\in\mathcal{Y}}\sum_{x\in\mathcal{X}}P_{XY}(x,y)&\sum_{u=1}^{M}P_{U|X}(u|x)\sum_{(\hat{x},\hat{y})} P_{\hat{X}\hat{Y}|U}(\hat{x},\hat{y}|u) \\
    &\cdot \mathbbm{1}\{\jmath(x,y,\hat{x},\hat{y},D_1,D_2) \geq \log M + \gamma\}\\
    &\cdot \mathbbm{1}\{d_1(x,\hat{x})\leq D_1 \}\mathbbm{1}\{d_2(y,\hat{y})\leq D_2 \}
\end{split}
\end{align}
\vspace{-0.5cm}
\begin{align}\label{convTerm2}
\begin{split}
    \leq \sum_{y\in\mathcal{Y}}\sum_{x\in\mathcal{X}}P_{XY}(x,y)\sum_{u=1}^{M}&P_{U|X}(u|x)\sum_{(\hat{x},\hat{y})} P_{\hat{X}\hat{Y}|U}(\hat{x},\hat{y}|u) \\
    &\cdot \frac{2^{\jmath(x,y,\hat{x},\hat{y},D_1,D_2) - \gamma}}{M} \\
    &\cdot 2^{\lambda_1(D_1-d_1(x,\hat{x}))} 2^{\lambda_2(D_2-d_2(y,\hat{y}))}
\end{split}
\end{align}
\vspace{-0.3cm}
\begin{align}
\begin{split}
    = \sum_{y\in\mathcal{Y}}\sum_{x\in\mathcal{X}}P_{XY}(x,y)\sum_{u=1}^{M}P_{U|X}(u|x)&\sum_{(\hat{x},\hat{y})} P_{\hat{X}\hat{Y}|U} (\hat{x},\hat{y}|u)\\
    &\cdot \frac{2^{\imath_{X;\hat{X}^*,\hat{Y}^*}(x;\hat{x},\hat{y})-\gamma}}{M}
\end{split}
\end{align}
\vspace{-0.3cm}
\begin{align}
\begin{split}
     = \frac{2^{-\gamma}}{M}\sum_{y\in\mathcal{Y}}\sum_{x\in\mathcal{X}}&P_{Y|X}(y|x)\sum_{u=1}^{M}P_{U|X}(u|x) \\ 
     &\cdot \sum_{(\hat{x},\hat{y})} P_{\hat{X}\hat{Y}|U}(\hat{x},\hat{y}|u) P_{X|\hat{X}^*\hat{Y}^*}(x|\hat{x},\hat{y}) 
\end{split}
\end{align}
\vspace{-0.4cm}
\begin{align}
\begin{split}
    \leq \frac{2^{-\gamma}}{M}\sum_{u=1}^{M}\sum_{(\hat{x},\hat{y})}P_{\hat{X}\hat{Y}|U}(\hat{x},\hat{y}|u)\sum_{x\in\mathcal{X}}&P_{X|\hat{X}^*\hat{Y}^*}(x|\hat{x},\hat{y})  \\
    &\cdot \sum_{y\in\mathcal{Y}}P_{Y|X}(y|x) \label{convTerm3}\\ 
\end{split} 
\end{align}
\begin{flalign}
\;\;\;\;= 2^{-\gamma} &&
\end{flalign}
where
\begin{itemize}
    \item (\ref{convTerm2}) is due to $\mathbbm{1}\{a \leq b\} \leq 2^{c(b-a)}$ for all $c\geq0$ and $a,b \in \mathbb{R}$.
    \item (\ref{convTerm3}) is obtained by bounding $P_{U|X}(u|x)\leq1$ for all $(x,u)$.
\end{itemize}
We get the desired result by picking the $\gamma$ that gives us best bound and taking the worst distribution to obtain a code independent bound .

\section{Proof of Theorem \ref{thmLogLoss}} \label{apx:loglossUniv}
 Suppose $D_2\geq D_{2,\min}$ and let the rate distortion achieving distribution for $R_2(D_2)$ to be $P_{\hat{Y}^\star|X}$. First consider the case $R_1(D_1) \geq R_2(D_2)$. Then by our assumption $R_2(D_{2,\mathrm{min}})\geq H(X)-D_1$ and continuity of rate-distortion function, there exists a $D_2' \leq D_2$ such that $R_2(D_2') = R_1(D_1)$. Denote the random variable achieving $R_2(D_2')$ by $Y'$ and let $X'(x) = \mathbb{P}[X=x|Y']$. Then with this choice of $(X',Y')$ we have:
\begin{align}
    \mathbb{E}[\Tilde{d}_2(X,Y')] &\leq D_2' \leq D_2 \\
    \mathbb{E}[d_1(X,X')] &= H(X|Y') = D_1. 
\end{align}
We also have
\begin{align}
    R(D_1,D_2) &\leq I(X;X',Y') \\
    &= I(X;Y') \\
    &= R_2(D_2') = R_1(D_1).
\end{align}

Next consider the alternative $R_2(D_2) \geq R_1(D_1)$. Then we have
\begin{align}
    H(X|\hat{Y}^\star) &\leq D_1
\end{align}
Then consider $Y'' = \hat{Y}^\star$ and $X''(x) = \mathbb{P}[X=x|Y'']$. The pair $(X'',Y'')$ satisfies the distortion constraints:
\begin{align}
    \mathbb{E}[\Tilde{d}_2(X,Y'')] &= \mathbb{E}[\Tilde{d}_2(X,\hat{Y})] \leq D_2 \\
    \mathbb{E}[d_1(X,X'')] &= H(X|Y'') \leq D_1. 
\end{align}
where $\Tilde{d}_2(x,\hat{y}) = \mathbb{E}[d_2(Y,\hat{y})|X=x]$. Then
\begin{align}
    R(D_1,D_2) &\leq I(X;X'',Y'') \\
    &= I(X;Y'') \\
    &= R_2(D_2).
\end{align}

Since $X'$ is a function of $Y'$ we have $Y-X-Y'-X'$ (Similarly for $X''$ and $Y''$). Note that we can always ensure that this function is one-to-one. Suppose for the moment that the function is many-to-one, i.e. $\exists y'_1,y'_2 \in \mathcal{Y}'$, $y_1'\neq y_2'$ such that $\mathbb{P}[X=x|Y'=y'_1] = \mathbb{P}[X=x|Y'=y'_2]$ for every $x\in\mathcal{X}$. Furthermore, without loss of generality assume $\Tilde{d}_2(x,y_1') \leq \Tilde{d}_2(x,y_2')$. Then we can combine $y_1'$ and $y_2'$ into a single symbol $y_1'$. This new random variable will have a lower expected distortion than $\mathbb{E}[\Tilde{d}_2(X,Y')]$ and have the same mutual information as $I(X;Y')$. Thus we can always have a $Y'$ that has a one-to-one relation with $X'$.


\section{Proof of Corollary \ref{convLogLoss}} \label{apx:cor2}
Suppose $R_2(D_2)<H(X)-D_1$, then $R(D_1,D_2)=H(X)-D_1$ and in (\ref{eq:lamdas}), $\lambda_1=1$ and $\lambda_2=0$. Assume that $P_{\hat{X}^*\hat{Y}^*|X}$ achieves the minimum in (\ref{eq:RD}). By the one-to-one relationship:
\begin{align}
    \imath_{X;\hat{X}^*\hat{Y}^*}(x;\hat{x},\hat{y}) &= \log \frac{P_{X|\hat{X}^*\hat{Y}^*}(x|\hat{x}^*,\hat{y}^*)}{P_X(x)} \\
    &= \imath_X(x) - \log \frac{1}{P_{X|\hat{Y}^*}(x|\hat{y}^*)} \label{eq:cor21}
\end{align}
and
\begin{align}
    d_1(x,\hat{x}^*) = \log \frac{1}{\hat{x}^*(x)} = \log \frac{1}{P_{X|\hat{Y}^*}(x|\hat{y}^*)}. \label{eq:cor22}
\end{align}
combining (\ref{eq:cor21}) and (\ref{eq:cor22}) with Theorem \ref{conv} gives us (\ref{eq:cor2p1}). For the other case, $R_2(D_2)\geq H(X)-D_1$, we have $R(D_1,D_2)=R_2(D_2)$ hence $\lambda_1=0$. Together with the one-to-one relationship between $\hat{X}^*$ and $\hat{Y}^*$ we have $\jmath_{X;\hat{X}^*,\hat{Y}^*}(x,y,\hat{x},\hat{y},D_1,D_2) = \jmath_{X;\hat{Y}^\star}(x,y,\hat{y},D_2)$.

\section{Proof of Theorem \ref{Ach2}} \label{apx:alt}
    Proof is similar to that of Theorem \ref{ach} except for the encoding function where we replace the encoder in (\ref{enc1}) with (\ref{enc2}) below. Fix a reconstruction codebook $P_{\hat{Y}}$ independent of $X$ and $0\leq\epsilon'\leq1$. Define the encoding metric:
    \begin{align}
        \pi_{\epsilon'}(x,\hat{y}) = \mathbbm{1}\left\{\mathbb{P}[d(Y,\hat{y})>D_2|X=x]\leq\epsilon'\right\} \label{encmetric}
    \end{align}
    Pick a random codebook with $M$ codewords $\{c_1,\ldots,c_M\}$ generated independently from $P_{\hat{Y}}$. Given a source realization $x$, encoder simply sends the index of the codeword $f(x) = i^*$
    \begin{align}
        i^* \in \argmax_{i \in {1,\ldots,M}} \pi_{\epsilon'}(x,c_i) \label{enc2}
    \end{align}
    where ties are broken arbitrarily. After receiving the index, decoder outputs the codeword pair with given index $g(i)=c_i$. Then the excess distortion probability given $X$ is:
    \begin{align}
        \mathbb{P}[d_2(Y,g&(f(X)))>D_2|X] = \\
        &\mathbb{P}[d_2(Y,g(f(X)))>D_2|X,\pi_{\epsilon'}(X,g(f(X)))=1]\\
        &\cdot\mathbb{P}[\pi_{\epsilon'}(X,g(f(X)))=1|X] \\
        +&\mathbb{P}[d_2(Y,g(f(X)))>D_2|X,\pi_{\epsilon'}(X,g(f(X)))=0] \\
        &\cdot\mathbb{P}[\pi_{\epsilon'}(X,g(f(X)))=0|X] \\
        \leq& \epsilon'(1-\mathbb{P}[\pi_{\epsilon'}+(X,g(f(X)))=0|X])\\ 
        +& \mathbb{P}[\pi_{\epsilon'}(X,g(f(X)))=0|X] \label{thm4pf}
    \end{align}
    We make the same observation as in (\ref{thm1pf2}):
    \begin{align}
        \mathbbm{1}\{ \pi_{\epsilon'}(x,g(f(x))) = 0 \} &= \mathbbm{1}\left\{ \max_i \pi_{\epsilon'}(x,c_i) = 0 \right\} \\
        &= \prod_{i=1}^M \mathbbm{1}\{ \pi_{\epsilon'}(x,c_i) = 0 \} \\
        &=\prod_{i=1}^M \mathbbm{1}\{ \pi'(x,c_i) > \epsilon \}. \label{thm4pf2}
    \end{align}
     Since codewords are generated i.i.d. from $P_{\hat{Y}}$, let the random variables representing the codewords be $\hat{Y}_1,\ldots,\hat{Y}_M$. Taking the expectation given $X$ in (\ref{thm4pf2}) we get:
    \begin{align}
        \mathbb{E}\left[ \left. \prod_{i=1}^M \mathbbm{1}\{ \pi'(X,\hat{Y}_i) > \epsilon \} \right| X\right] &= \prod_{i=1}^M \mathbb{P}\left[ \left. \pi'(X,\hat{Y}_i) > \epsilon \right| X\right] \\
        &= \mathbb{P}\left[ \left. \pi'(X,\hat{Y}) > \epsilon \right| X\right]^M
    \end{align}
    Using this result in (\ref{thm4pf}) taking the infimum with respect to $P_{\hat{Y}}$ and $\epsilon'$ completes the proof.

\section{Proof of Theorem \ref{thm:achLog}} \label{apx:achLog}
    The proof uses a random coding method similar to that of Theorem \ref{Ach2}, together with a binning argument as developed in \cite{shkel2017}. Fix a $P_{\hat{Y}}$, $\gamma\geq0$ and $0\leq\epsilon\leq1$. Pick a random codebook with $M$ codewords $\{c_1,\ldots,c_M\}$ generated i.i.d. from $P_{\hat{Y}}$. Using the same encoding metric in (\ref{encmetric}) define for each $x\in\mathcal{X}$ the following set:
    \begin{align}
        \mathcal{C}_x = \left\{i:\argmax_{i \in {1,\ldots,M}} \pi_{\epsilon'}(x,c_i)\right\}
    \end{align}
    $f(x)=i\in\mathcal{C}_x$ chosen uniformly. Furthermore create a virtual bin uniformly $b(x)=j\in\{1,\ldots,L\}$ where $L=\lfloor{2^{D_1}}\rfloor$. The reasoning here is that given a distortion constraint $D_1$, we need to spare at least $2^{-D_1}$ probability to $\hat{x}(x)$ so that $d_1(x,\hat{x}) \leq D_1$. Hence a single $\hat{x}$ allows us to cover $\lfloor{2^{D_1}}\rfloor$ elements \cite{shkel2017}. Encoder sends the index $f(x)=i$. Decoder upon receiving the index $i$, for the indirect part sets the output as $g_2(i)=c_i$. For the direct part, define the following set:
    \begin{align}
        \mathcal{B}(i) = \left\{ x \in \bigcup_{j:|f^{-1}(i)\cap b^{-1}(j) \cap \mathcal{L}|=1} f^{-1}(i)\cap b^{-1}(j) \cap \mathcal{L} \right\}
    \end{align}
    where
    \begin{align}
        f^{(-1)}(i) &= \{x:f(x)=i\}, \\
        b^{(-1)}(j) &= \{x:b(x)=j\}, \\
        \mathcal{L} &= \{x:\imath_{X}(x) \leq D_1 + \log M - \gamma\}.
    \end{align}
    Now define the reconstruction $g_1(i)=P_i$ as follows. If $\mathcal{B}(i) \neq \emptyset$ then:
    \begin{align}
        P_i(x) = \begin{cases} \frac{1}{|\mathcal{B}(i)|}, \;\; &x\in\mathcal{B}(i) \\
        0, \;\; &\mathrm{otherwise}
        \end{cases}
    \end{align}
    and $P_i=P_X$ if $\mathcal{B}(i) = \emptyset$. Observe that if $x\in\mathcal{B}(i)$ then
    \begin{align}
        d_1(x,g_1(f(x))) = \log|\mathcal{B}(i)| \leq \log L \leq D_1
    \end{align}
    Hence it satisfies the distortion constraint. In light of this we begin analyzing the performance. Assume for the moment that $(x_0,y_0)$ is the realization of the source and $f(x_0)=i$, $b(x_0)=j$ the excess distortion event is contained in union of the following error events:
    \begin{align}
        E_1 &: d_2(y_0,c_i)>D_2  \\ 
        E_2 &: \imath_X(x_0)>D_1+\log M -\gamma \\
        \begin{split}
        E_3 &: \exists x\neq x_0, \\
        &\;\;\;f(x)=i \cap b(x)=j \cap \imath_X(x)\leq D_1+\log M -\gamma    
        \end{split}
    \end{align}
    $\mathbb{P}[E_1]$ can be bounded above by using Theorem \ref{Ach2}. $\mathbb{P}[E_2] = \mathbb{P}[\imath_X(X)>D_1+\log M -\gamma]$ independent of the codebook. To bound $\mathbb{P}[E_3]$, suppose for the moment $\{c_1,\ldots,c_M\}$ and $x_0$ are fixed. Then
    \begin{align}
        \begin{split}
            \mathbb{P}[E_3] &\leq \sum_{x\neq x_0}\mathbb{P}[f(x)=f(x_0) \cap b(x)=b(x_0)] \\
            &\;\;\;\cdot \mathbbm{1}\{\imath_X(x)\leq D_1+\log M -\gamma\}
        \end{split} \\
        &= \frac{1}{L}\sum_{x\neq x_0}\frac{|\mathcal{C}_x \cap \mathcal{C}_{x_0}|}{|\mathcal{C}_x|| \mathcal{C}_{x_0}|}\mathbbm{1}\{\imath_X(x)\leq D_1+\log M -\gamma\} \\
        &\leq \frac{1}{L|C_{x_0}|}\sum_{x\neq x_0} \{\imath_X(x)\leq D_1+\log M -\gamma\} \\
        &\leq \frac{1}{L|C_{x_0}|}M2^{D_1}2^{-\gamma} \\
        &\leq \frac{M}{|\mathcal{C}_{x_0}|}2^{1-\gamma}. \label{eq:cx0}
    \end{align}
    Next we average (\ref{eq:cx0}) with respect to the codebook given $X=x_0$. In order to do that first consider $|\mathcal{C}_{x_0}|$:
    \begin{align}
        |C_{x_0}|= \begin{cases}M, \;\;\;\;\;\;\;\;\;\;\;\;\;\;\; \sum_{i=1}^{M} \mathbbm{1}\left\{\pi'(x_0,c_i)\leq\epsilon'\right\}=0 \\
            \sum_{i=1}^{M} \mathbbm{1}\left\{\pi'(x_0,c_i)\leq\epsilon'\right\},\;\;\;\;\;\;\;\;\;\; \mathrm{otherwise}
        \end{cases}
    \end{align}
Now denote the random variables that generate the $c_i$ as $\hat{Y}_i$ for $1\leq i \leq M$. Since $\hat{Y}_i$ are i.i.d., given $X$, the random variables $\mathbbm{1}\{\pi'(x,\hat{Y}_i)\leq \epsilon'\}$ are i.i.d. Bernoulli random variables with parameter $1-\eta(\epsilon')$. Thus their sum will be a Binomial random variable. The distribution of $|C_X|$ given $X$ will be close to this Binomial except for the end points since $|\mathcal{C}_X|$ cannot be 0. Hence for $1\leq k < M$: 
    \begin{align}
        \mathbb{P}[|\mathcal{C}_{X}|=k|X] = {M \choose k}(1-\eta(\epsilon'))^k\eta(\epsilon')^{M-k}
    \end{align}
    and
    \begin{align}
        \mathbb{P}[|\mathcal{C}_{X}|=M|X] = \eta(\epsilon')^M + (1-\eta(\epsilon'))^M.
    \end{align}
    This allows us to calculate the desired expectation:
    \begin{align}
    \begin{split}
        \mathbb{E}\left[ \left. \frac{1}{|C_X|}\right|X \right] =& \frac{1}{M}(\eta(\epsilon')^M + (1-\eta(\epsilon'))^M) \\
        +& \sum_{k=1}^{M-1} \frac{1}{k} {M \choose k} (1-\eta(\epsilon'))^k\eta(\epsilon')^{M-k}
    \end{split}
    \end{align}
    Finally taking the expectation with respect to $X$ gives us the desired bound for $P[E_3]$. Using union bound on $\mathbb{P}[\cup_{i=1}^3E_i]$ and optimizing with respect to $\epsilon'$, $\gamma$ and $P_{\hat{Y}}$ completes the proof.
    
\bibliographystyle{IEEEtran}
\bibliography{references}

\end{document}